\newcommand{\Free}[1]{\ensuremath{\text{\textsl{Free}}(#1)}}
\newcommand{\Forb}[1]{\ensuremath{\text{\textsl{Forb}}(#1)}}
\newcommand{\Prob}[1]{\ensuremath{\text{\textsc{#1}}}}
\newcommand{\classX}[1]{\ensuremath{\text{\textsf{\textbf{#1}}}}} 
\newcommand{\classP}{\classX{P}}
\newcommand{\classNP}{\classX{NP}}
\newcommand{\NPC}{\classX{NP-complete}}
\newcommand{\NPH}{\classX{NP-hard}}
\newcommand{\etal}{\textit{et al.}}
\newtheorem{theorem}{Theorem}
\newtheorem{corollary}{Corollary}
\begin{document}

\title{Reducibility among NP-Hard graph problems and boundary classes}
\author{Syed Mujtaba Hassan\footnote{\url{mujtaba.hassan@sse.habib.edu.pk}, Computer Science Department, Habib University, Karachi, Pakistan (corresponding author)} \and Shahid Hussain\footnote{\url{shahidhussain@iba.edu.pk}, Computer Science Department, Institute of Business Administration, Karachi, Pakistan}
\and Abdul Samad \footnote{\url{abdul.samad@sse.habib.edu.pk}, Computer Science Department, Habib University, Karachi, Pakistan}}
\date{July 9, 2025}
\maketitle

\begin{abstract}
    Many $\NPH$ graph problems become easy for some classes of graphs. For example, \Prob{coloring} is easy for bipartite graphs, but $\NPH$ in general. 
    So we can ask question like when does a hard problem become easy? 
    What is the minimum substructure for which the problem remains hard? We use the notion of \emph{boundary classes} to study such questions.
    In this paper, we introduce a method for transforming the boundary class of one $\NPH$ graph problem into a boundary class for another problem.
    If $\Pi$ and $\Gamma$ are two $\NPH$ graph problems where $\Pi$ is reducible to $\Gamma$, we transform a boundary class of $\Pi$ into a boundary class of $\Gamma$. 
    More formally if $\Pi$ is reducible to $\Gamma$, where the reduction satisfies certain conditions, then $X$ is a boundary class of $\Pi$ if and only if the image of $X$ under the reduction is a boundary class of $\Gamma$. 
    This gives us a relationship between boundary classes and reducibility among several $\NPH$ problems. To show the strength of our main result, we apply our theorem to obtain some previously unknown boundary classes for a few graph problems namely;
    \Prob{vertex-cover}, \Prob{clique}, \Prob{traveling-salesperson}, \Prob{bounded-degree-spanning-tree}, \sloppy\Prob{subgraph-isomorphism} and \Prob{clique-cover}. \smallskip

    \noindent\textbf{Keywords:} graph theory, boundary classes, NP-Hard graph problem, reducibility
\end{abstract}

\renewcommand\thefootnote{}

\renewcommand\thefootnote{\fnsymbol{footnote}}
\setcounter{footnote}{1}
\section{Introduction}
In computational complexity theory we are interested in classifying computational problems based on their inherent difficulty i.e., how hard or easy is it to answer a decision problem.
Many graph problems are $\NPH$ that is every problem in $\classNP$ is reducible to them \cite{Sipser13}. We say that these problems are at least as hard as the hardest problems in class $\classNP$.
However, we see that many of these $\NPH$ graph problems become tractable when the class of input graphs is restricted \cite{Alekseev07}.
The natural question is when are these hard problems easy and what is the smallest substructure that causes these problems to be hard?
Boundary classes are used to study exactly this. The notion of boundary classes was first introduced in \cite{Alekseev03} by Vladimir Alekseev to study the maximum independent set problem. 
The idea is to draw a boundary around the hard instances of the problem separating the easy and hard classes. 

Over the years many boundary classes have been proved for many $\NPH$ graph problems such as \emph{independent-set} \cite{Alekseev03,Alekseev07}, \emph{vertex-coloring} \cite{Lozin11,Malyshev11,Malyshev13-2}, \emph{edge-coloring} \cite{Malyshev13-2,Malyshev14}, 
\emph{hamiltonian-cycle} \cite{Korpelainen18,Lozin11,Nicholas10,Munaro17}, \emph{hamilton}\emph{ian-path}~\cite{Munaro17}, \emph{dominating-set}~\cite{Alekseev07,Alekseev04,Dakhno23,Malyshev16}, \emph{upper-domination}~\cite{Hussain16}, \emph{feedback-vertex-set}~\cite{Munaro17}, 
\emph{role-coloring}~\cite{purcell21}, \emph{coupon-coloring}~\cite{purcell21}, \emph{independent-dominating-set}~\cite{Alekseev07}, \emph{induced-matching}\cite{Alekseev07}, \emph{dissciation-set}~\cite{Alekseev07}, \emph{longest-cycle}~\cite{Alekseev07}, \emph{longest-path}~\cite{Alekseev07}, 
\emph{edge-domination}~\cite{Alekseev07}, $P_3$\emph{-factor}\cite{Alekseev07}, \emph{maximum-induced-subgraph}~\cite{Malyshev14-2}, $k$\emph{-path-partition}\cite{Korpelainen18}, \emph{efficie}\emph{nt-edge-domination}~\cite{Domingos11,Korpelainen18}, and \emph{edge-ranking}~\cite{Malyshev14-3}.

One question is to find a relationship between existing boundary classes. In this paper, we introduced a relationship between boundary classes of $\NPH$ graph problems which are reducible to each other.
We introduce a method for transforming a boundary class of one $\NPH$ graph problem into a boundary class for another problem when a suitable reduction between them exists.
If $\Pi$ and $\Gamma$ are two $\NPH$ graph problems where $\Pi$ is reducible to $\Gamma$, we transform a boundary class of $\Pi$ into a boundary class of $\Gamma$. More specifically we show that if $\Pi$ is reducible to $\Gamma$, where the reduction satisfies certain conditions (described in detail in section~\ref{sec: main result}) then $X$ is a boundary class of $\Pi$ if and only if the image of $X$ under the reduction is a boundary class of $\Gamma$. We introduce this theorem more formally in Section~\ref{sec: main result} of this paper. This gives us a relationship between boundary classes and reducibility amongst $\NPH$ problems. This theorem also gives us a method to find new boundary classes for one $\NPH$ graph problem from a boundary class of another $\NPH$ graph problem given there is a reduction between them that satisfies the conditions.

We apply our theorem to obtain some previously unknown boundary classes for the following graph problems \emph{vertex-cover}, \emph{clique}, \emph{traveling-salesperson}, \emph{bounded-degree-spanning-tree}, \emph{clique-cover} and \emph{subgraph-isomorphism}. For all these problems these are the first boundary classes. This shows the strength of our result. 
Given an $\NPH$ graph problem $\Pi$, with a boundary class $X$ we can find a boundary class for another $\NPH$ graph problem $\Gamma$ we just need to find a reduction between them (if such a reduction exists) that satisfies the conditions of our theorem. 
Thus, our result provides a framework for systematically discovering new boundary classes for many $\NPH$ graph problems which are currently unknown.

\section{Preliminaries}
An undirected graph is a pair $G = (V, E)$ of sets $V$ and $E$ such that the elements of $E$ are $2$-element subsets of $V$. An edge $e$ is represented as $\{u, v\}$ such that~$u, v\in V$ and $u\neq v$. We sometimes use $V(G)$ and $E(G)$ to represent the vertex and edge sets of the graph $G$, respectively. We say a graph $H$ is an \emph{induced subgraph} of the graph $G$ if $H$ is obtained by removing some vertices of $G$ and keeping the edges between the remaining vertices. We represent an induced subgraph $H$ of $G$ as $H\subseteq G$. For some subset of vertices $U\subseteq V$, we write $G-U$ to denote the subgraph of $G$ obtained by removing the vertices in~$U$. We represent the complement of a graph $G$ as $\overline{G}$. The complement of a graph $G$ is another graph where the vertices are the same as $G$ and two vertices are adjacent in $\overline{G}$ if and only if they are not adjacent in $G$. The degree of a vertex $v$ in $G$ is the number of edges incident on $v$ and is represented as $d_G(v)$. We use $\Delta(G)$ and $\delta(G)$ to represent the maximum and minimum degrees of $G$, respectively. A \emph{path} $P = (V', E')$ is a non-empty subgraph of a graph $G = (V, E)$ such that $V' = \{v_1, v_2, \ldots, v_k\}$ is a set of distinct vertices and $E' = \{ \{v_1, v_2\}, \{v_2, v_3\}, \ldots, \{v_{k-1}, v_{k}\} \}$. A path of length $10$ can thus be denoted as $P_{10}$. We define a \emph{cycle} as a path of at least three vertices (at least $P_3$) such that $E'$ contains $\{v_1, v_k\}$.

We define the \emph{chromatic number} of a graph $G$ as $\chi(G)$. The chromatic number is the minimum number of colors needed to color each vertex of the graph such that no two adjacent vertices are of the same color. 

A graph $G$ is called a \emph{complete graph} $K_n$ of $n\in\mathbb{N}$\footnote{the set $\mathbb{N}$ is the set of natural numbers such that $\mathbb{N} = \{0, 1, 2, \ldots\}$} vertices if and only if each vertex in $G$ is connected with every other vertex of $G$. A complete subgraph of a graph is called a \emph{clique}. We define the clique size $\omega(G)$ of a graph $G$ as the size of a maximum clique in $G$. A \emph{clique cover} of a graph $G$ is a partition of vertices of $G$ into cliques. For example, a clique cover of a graph $G$ of size $k$ is the partition $G$ into $k$ cliques.

We say a subset of vertices $S\subseteq V$ of a graph $G$ is called an \emph{independent set} if no two vertices in $S$ are adjacent. The size of a maximum independent set in a graph $G$ is called the \emph{independence number} of $G$ and is represented as $\alpha(G)$.

We define the \emph{vertex cover} of a graph $G$ as a subset of vertices $S\subseteq V$ such that each edge in $G$ is incident on at least one vertex in $S$. The size of a minimum vertex cover in a graph $G$ is called the \emph{vertex cover number} of $G$ and is represented as $\beta(G)$.

We say an undirected graph $G$ is a tree if and only if $G$ is connected and has no cycles. In other words, a connected undirected graph $G$ with $n$ vertices and $n-1$ edges is a tree.

A \emph{spanning tree} of a graph $G$ is a subgraph of $G$ that is a tree and contains all the vertices of $G$.  A \emph{$k$-bounded degree spanning tree} of a graph $G$ is a spanning tree of $G$ such that each vertex in the tree has a degree at most $k$.

A path $P$ in a graph $G$ is called a \emph{Hamiltonian path} if $V(P) = V(G)$. A cycle $C$ in a graph $G$ is called a \emph{Hamiltonian cycle} if $V(C) = V(G)$.

We say that two graphs $G_1$ and $G_2$ are \emph{isomorphic} if and only if there exists a bijection $f: V(G_1) \to V(G_2)$ such that $\{u, v\}\in E(G_1)$ if and only if $\{f(u), f(v)\}\in E(G_2)$.

A collection of graphs with some shared property is called a \emph{graph class}. For example, the class of all graphs with chromatic number $3$ is a graph class. We say a graph class $X$ is \emph{hereditary} if and only if for every graph $G\in X$, every induced subgraph of $G$ is also in $X$. 

For a class of graphs $Z$, we say the collection $\Free{Z}$ represents the class of all graphs that do not contain any induced subgraphs isomorphic to a graph in~$Z$. For example, $\Free{K_5}$ represents the class of all graphs that do not contain any induced subgraph isomorphic to $K_5$. A class of graphs $X$ is called \emph{hereditary} if and only if there exists a set of graphs $Z$ such that $X = \Free{Z}$ \cite{Alekseev07}. If $X = \Free{Z}$ and the class $Z$ is minimal and unique then we say that the class $Z$ is the \emph{forbidden class} of $X$, denoted as $\Forb{X}$.

In this paper, we will define various graph theory-related computational problems. These problems come in two flavors: \emph{decision problems} and \emph{optimization problems}. For example questions like ``Does the graph $G$ contain a clique of size $k$?'' or ``What is the size of the maximum clique in the graph $G$?'' are decision and optimization problems, respectively. We will define these problems formally. For example, we can define the decision problem of a clique as follows:
\[
\Prob{clique} = \{ \langle G, k \rangle \mid G \text{ contains a clique of size } k \}.
\]
NThis set $\Prob{clique}$ is infinite, consisting of all pairs $\langle G, k \rangle$ where $G$ contains a clique of size $k$. So the question ``Does $G$ contain a clique of size $k$?'' can be answered by checking if $\langle G, k \rangle \in \Prob{clique}$. 

In computational complexity theory, we often solve one problem by means of another problem, this is called reduction. For two problems $A$ and $B$ defined over some alphabet $\Sigma$, we say that the problem $A$ reduces to the problem $B$ if and only if there exists a computable function $f: \Sigma^* \to \Sigma^*$ such that for every $x\in \Sigma^*$ we have $x\in A$ if and only if $f(x)\in B$. We denote this as $A\leq B$. Furthermore, if the function can be computed in \emph{polynomial time} we say that reduction is \emph{polynomial time reduction} and denote it as $A\leq_\classP B$~\cite{Karp72,Sipser13}. 

We say that the class \classP\ contains all those decision problems that can be solved in polynomial time on a deterministic machine \cite{Karp72,Sipser13}. Similarly, the class \classNP\ contains all those decision problems that can be solved in polynomial time on a non-deterministic machine. Interestingly, the class \classNP\ can also be defined as the class of all decision problems for which a solution can be verified in polynomial time with a polynomial size certificate \cite{Karp72,Sipser13}. A decision problem $A$ is said to be \NPH\ if and only if for every $B\in \classNP$ we have $B\leq_\classP A$. 

\section{Boundary Classes}
In this section, we define the notion of boundary classes. Boundary classes were first introduced by Alekseev in 2003 \cite{Alekseev03} to study the independent set problem. The notion was later provided in the most general form by Alekseev \etal in 2007 \cite{Alekseev07}.
The boundary class concept separates or defines the boundary between the easy and hard problems of graph theory. For an \NPH\ graph problem $\Pi$, a hereditary class of graphs $X$ is called $\Pi$-hard if $\Pi$ remains \NPH\ when the input is restricted $X$, and $X$ is called $\Pi$-easy if $\Pi$ becomes polynomial-time solvable when the input is restricted to $X$ \cite{Alekseev07}. 
We say $X$ is a $Y$-\emph{limit class} of the problem $\Pi$ (or $X$ is $(\Pi, Y)$-limit for short) for $\Pi$-hard class $Y$, if there exists a sequence of $\Pi$-hard classes $X_1\supseteq X_2\supseteq \ldots$ of $Y$ such that $X = \bigcap_{i\in \mathbb{Z}^+} X_i$ \cite{Alekseev07}. We will call $X$ a limit class for $\Pi$ if there is a $\Pi$-hard class $Y$ such that $X$ is $(\Pi , Y)$-limit. 
The minimal limit class under set inclusion is called a \emph{boundary class} with respect to $Y$ (or $(\Pi, Y)$-boundary for short) \cite{Alekseev07}. We will call $X$ a boundary class for $\Pi$ if there is a $\Pi$-hard class $Y$ such that $X$ is $(\Pi , Y )$-boundary. 
The following two theorems by Alekseev \etal shows the strength of the notion of boundary classes in studying the computational complexity of graph problems. 

\begin{theorem}[\cite{Alekseev07}]\label{theorem1}
    A problem $\Pi$ is $\NPH$ in a class $Y$ of graphs if and only if $Y$ contains a $(\Pi, Y)$-boundary class.
\end{theorem}

\begin{theorem}[\cite{Alekseev07}]\label{theorem2}
    A problem $\Pi$ is $\NPH$ in a subclass $Z \subset Y$ defined by finitely many forbidden induced subgraphs with respect to $Y$ if and only if $Z$ contains a $(\Pi, Y)$-boundary class.
\end{theorem}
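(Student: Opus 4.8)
The plan is to establish the biconditional through its two implications, isolating the reverse direction as the one that carries the real content and as the only place where finiteness of the forbidden set is actually used.

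For the forward implication, suppose $\Pi$ is $\NPH$ on $Y$. Then $Y$ is itself a $\Pi$-hard class, so I would apply the existence guarantee of Theorem~\ref{theorem1} to the hard class $Y$ and extract a boundary class lying inside it, which is precisely the assertion that $Y$ contains a boundary class. No use is made here of the finiteness of the forbidden set: this direction holds for any hard class, and the finiteness hypothesis will only be needed going the other way.

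For the reverse implication, assume $Y = \Free{Z}$ with $Z = \{H_1, \dots, H_k\}$ finite, and that $Y$ contains a boundary class $B$. Being a boundary class, $B$ is in particular a limit class, so I fix a descending sequence of $\Pi$-hard classes $X_1 \supseteq X_2 \supseteq \cdots$ with $B = \bigcap_{i \in \mathbb{Z}^+} X_i$, recalling that each $X_i$ is hereditary. The first observation is that no forbidden graph survives in $B$: since $B \subseteq Y = \Free{Z}$ and each $H_j$ contains itself as an induced subgraph, we have $H_j \notin B$ for every $j$. As $B$ is the intersection of the chain, each $H_j$ must leave it at some finite stage, i.e.\ there is an index $i_j$ with $H_j \notin X_{i_j}$. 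Finiteness of $Z$ now enters decisively: setting $N = \max_{1 \le j \le k} i_j$ and using that the chain is decreasing, I obtain $H_j \notin X_N$ simultaneously for all $j$. Heredity then upgrades this non-membership into a forbidden-subgraph statement, since a hereditary class that omits $H_j$ cannot contain $H_j$ as an induced subgraph of any member, giving $X_N \subseteq \Free{H_j}$; intersecting over the $k$ forbidden graphs yields $X_N \subseteq \Free{Z} = Y$. Finally, $X_N$ is a $\Pi$-hard class with $X_N \subseteq Y$, and a problem that is $\NPH$ on a subclass is $\NPH$ on every superclass, so $\Pi$ is $\NPH$ on $Y$.

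I expect the main obstacle to be the reverse direction, specifically the passage from the limit class $B$ to a genuinely $\Pi$-hard class $X_N$ nested between $B$ and $Y$. Two ingredients must be combined exactly: the finiteness of $Z$, which is what lets a single index $N$ work for all forbidden graphs at once (an infinite forbidden set would leave no uniform stage, and the implication genuinely fails in that setting), and the heredity of the classes $X_i$, which is what converts the membership fact $H_j \notin X_N$ into the structural containment $X_N \subseteq \Free{H_j}$. The forward direction, by contrast, should reduce cleanly to the existence statement already recorded in Theorem~\ref{theorem1}.
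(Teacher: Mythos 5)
Your argument is correct, and it is the standard proof of this result from the cited sources (Alekseev et al.); the paper itself only states the theorem with citations and supplies no proof of its own, so there is nothing internal to compare against. Both halves are sound: the reverse direction correctly isolates where finiteness and heredity are used (a single uniform stage $N$ at which all $H_j$ have left the chain, then heredity to get $X_N \subseteq \Free{Z} = Y$). One caveat: your forward direction extracts a boundary class lying \emph{inside} the hard class $Y$, which is the correct and standard form of that existence result, whereas Theorem~\ref{theorem1} as printed in the paper states the inclusion as $Y \supseteq X$; that appears to be a typo in the paper, and your usage is the right one.
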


This shows us that if we obtain all the boundary classes for an $\NPH$ problem $\Pi$ we can then completely classify finitely defined graph classes with respect to the complexity of $\Pi$. This makes the notion of boundary classes very powerful in studying the computational complexity of $\NPH$ graph problems.

\section{Problems}\label{sec:problems}
In this section, we provide a formal definition various graph theoretic computational problems that we will use in later results.  

\begin{eqnarray}
    \Prob{independent-set} &=& \{ \langle G, k \rangle \mid \alpha(G) = k \} \\
    \Prob{vertex-coloring} &=&\{\langle G, k \rangle \mid \chi(G) \leq k \} \\
    \Prob{hamiltonian-path} &=& \{\langle G \rangle \mid \text{there exists a path $P\subseteq G$ } \nonumber \\
    && \text{such that $V(P) = V(G)$} \} \\
    \Prob{hamiltonian-cycle} &=& \{ \langle G \rangle \mid \text{ there exists a cycle $C\subseteq G$} \nonumber\\
    && \text{such that $V(C) = V(G)$} \} \\
    \Prob{clique} &=& \{ \langle G, k\rangle \mid \omega(G) = k \} \\
    \Prob{vertex-cover} &=& \{ \langle G, k\rangle \mid \beta(G) = k \} \\
    \Prob{bounded-degree-spanning-tree} &=& \{ \langle G, k\rangle \mid \text{there exists a $k$-bounded} \nonumber \\ 
    && \text{degree spanning tree in $G$} \} \\
    \Prob{clique-cover} &=& \{ \langle G, k\rangle \mid \text{$G$ has a clique cover of size} \nonumber \\ 
    && \text{at most $k$} \} \\
    \Prob{subgraph-isomorphism} &=& \{ \langle G, H\rangle \mid \text{$G$ contains a subgraph} \nonumber \\ 
    && \text{isomorphic to $H$} \}\\
    \Prob{travelling-salesperson} &=& \{ \langle G = (V, E), W : E \to \mathbb{R}, k\rangle \mid \text{$G$ is a weighted} \nonumber \\ 
    && \text{graph and $G$ contains a Hamiltonian cycle $C$} \nonumber \\ 
    && \text{such that $\sum_{e\in E(c)} W(e) \leq k$} \} 
\end{eqnarray}

\section{Main Results}\label{sec: main result}
The notion of boundary classes has been an active area of research since it was initially introduced by Vladimir E. Alekseev. One major question is if there exists any relation between existing boundary classes.
In this section, we introduce a result that shows a relation between boundary classes of problems that are reducible to each other. 
This relation gives us a method to obtain a new boundary class for a problem from existing known boundary classes of other problems.

For $\NPH$ problems $\Pi$ and $\Gamma$, $\Pi$ is said to be \emph{bi-reducible} to $\Gamma$ if there exists a polynomial time computable function $g: \Sigma^* \to \Sigma^*$ such that $\forall w \in \Sigma^*,\; w \in \Pi \iff g(w) \in \Gamma$ and there exists a subset $Z$ of $\Gamma$ such that the function $f$ obtained from restricting the codomain of $g$ to $Z$ is a bijection from $\Pi$ to $Z$ such that $\forall w \in \Sigma^*,\; w \in \Pi \iff f(w) \in Z$ and $f^{-1}$ is also polynomial time computable. The function $f$ is called a bi-reduction from $\Pi$ to $\Gamma$

If the problems $\Pi$ and $\Gamma$ are graph problems then for $f$, we will call the bijection $f_G$ a graph reduction under $f$ if a graph $G$ is mapped to a graph $G'$ under $f$ then $f_G(G) = G'$. So for the reduction $f$, $f_G$ is the graph to graph part of the mapping. So we remove the parameter mapping from $f$ and keep only the graph to graph map to obtain $f_G$.
If for every hereditary class $X$ of graphs, $f_G(X)$ is also hereditary then the graph reduction is called \emph{hereditary closed}.

\begin{theorem}\label{theorem3}
    Let $\Pi$ an $\Gamma$ be two $\NPH$ graph problems and let $f: \Sigma^* \to \Sigma^*$ be a bi-reduction from $\Pi$ to $\Gamma$ and the graph reduction $f_G$ is hereditary closed. Then a class of graph $X$ is a $(\Pi, A)$-boundary if and only if $f_G(X)$ is a $(\Gamma, f_G(A))$-boundary for some hereditary class $A \supseteq X$.
\end{theorem}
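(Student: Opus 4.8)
The plan is to use $f_G$ to set up an inclusion-preserving bijection between the limit classes of $\Pi$ and those of $\Gamma$, and then to read off boundary classes as the inclusion-minimal limit classes on each side. I would first record the preliminaries implicit in the setup: since $f$ is a bijection on $\Sigma^*$ with well-defined graph part, $f_G$ is a bijection on graphs with $(f_G)^{-1}=(f^{-1})_G$, and restricting $f$ (respectively $f^{-1}$) to instances whose graph lies in a fixed class sends them, by the definition of $f_G$, to instances whose graph lies in the image class, giving a polynomial-time reduction between the restricted problems. From this the \emph{hardness equivalence} follows symmetrically: for a hereditary class $X$, restricting $f$ gives $\Pi|_X \leq_\classP \Gamma|_{f_G(X)}$ and restricting $f^{-1}$ gives $\Gamma|_{f_G(X)} \leq_\classP \Pi|_X$ (here $\Pi|_X$ denotes $\Pi$ with inputs restricted to $X$), so $X$ is $\Pi$-hard if and only if $f_G(X)$ is $\Gamma$-hard, the class $f_G(X)$ being hereditary because $f_G$ is hereditary closed.

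Next I would transfer the limit-class construction. If $X=\bigcap_i X_i$ with $X_1\supseteq X_2\supseteq\cdots$ all $\Pi$-hard, then the $f_G(X_i)$ are nested and $\Gamma$-hard, and $f_G(X)=\bigcap_i f_G(X_i)$ since $f_G$ is injective (injectivity is exactly what lets the image commute with the infinite intersection); thus $f_G$ maps limit classes of $\Pi$ to limit classes of $\Gamma$. The difficulty is the reverse: given a limit class $f_G(X)$ of $\Gamma$ witnessed by $Y_1\supseteq Y_2\supseteq\cdots$, the natural witnesses for $X=f_G^{-1}(f_G(X))$ are the $f_G^{-1}(Y_i)$, which are nested, intersect to $X$, and are $\Pi$-hard by the hardness equivalence \emph{provided each is hereditary}. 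This proviso is where I expect the main obstacle to lie, because the hypotheses guarantee only that $f_G$, and not a priori $f_G^{-1}$, preserves heredity.

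To remove the obstacle I would prove the key lemma that a hereditary-closed bijection of graphs is an automorphism of the induced-subgraph order, so that $f_G^{-1}$ is hereditary closed as well. Viewing $f_G$ as an injective, union- and intersection-preserving map on the lattice of hereditary classes — the down-sets of the induced-subgraph order — the crux is to show it also preserves the join-irreducible elements, namely the principal hereditary classes generated by single graphs; this reduces to verifying that $f_G$ carries induced subgraphs of $G$ to induced subgraphs of $f_G(G)$. Granting this, surjectivity onto all down-sets follows by expressing each hereditary class as the union of the principal classes it contains, and $f_G^{-1}$ then preserves heredity. For the concrete reductions applied later, for instance complementation relating \Prob{clique} and \Prob{vertex-cover}, the lemma holds trivially because those maps respect the induced-subgraph relation in both directions.

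Finally, with $f_G^{-1}$ hereditary closed the two limit-class transfers above become mutually inverse, so $X\mapsto f_G(X)$ is an inclusion-preserving bijection between the limit classes of $\Pi$ and those of $\Gamma$, with inverse $Y\mapsto f_G^{-1}(Y)$. Because a boundary class is by definition an inclusion-minimal limit class and an inclusion-preserving bijection carries minimal elements to minimal elements in both directions, $X$ is a boundary class of $\Pi$ if and only if $f_G(X)$ is a boundary class of $\Gamma$, which is the claim.
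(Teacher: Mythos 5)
Your proposal follows the same overall route as the paper's proof: push a nested sequence of $\Pi$-hard classes forward through $f_G$ to obtain a limit class of $\Gamma$, use injectivity of $f_G$ to commute the image with the infinite intersection, and transport any strictly smaller limit class back through $f_G^{-1}$ to contradict minimality (and symmetrically for the converse). Where you genuinely add something is in isolating the point the paper passes over in silence: in the minimality step one must know that the preimage classes $f_G^{-1}(Y_i)$ are \emph{hereditary}, because $\Pi$-hardness, and hence the notion of limit class, is defined only for hereditary classes, while the hypothesis that $f_G$ is hereditary closed speaks only about images, not preimages. The paper simply asserts ``as $Y'_i$ is $\Gamma$-Hard, $f_G^{-1}(Y'_i)$ is $\Pi$-Hard'' without addressing this, so you have correctly located the one nontrivial obstacle in the whole argument. (Both you and the paper also tacitly assume that $f_G$ is a well-defined bijection on graphs, which the definition of bi-reduction does not by itself guarantee; you at least state this assumption explicitly.)

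That said, your patch is incomplete at exactly its crux. You reduce the key lemma (a hereditary-closed bijection of graphs has hereditary-closed inverse) to the claim that $f_G$ maps the principal hereditary class of each graph onto that of its image, and then write ``Granting this\ldots''; but this claim does not follow formally from the hypothesis alone. Writing $D(G)$ for the class of induced subgraphs of $G$ (up to isomorphism), the hypothesis only yields the inclusion $D(f_G(G)) \subseteq f_G(D(G))$, since $f_G(D(G))$ is a hereditary class containing $f_G(G)$. To close the gap, argue by counting: injectivity gives $|f_G(D(G))| = |D(G)|$, so $|D(f_G(G))| \le |D(G)|$; a graph on $k$ vertices has at least $k+1$ pairwise non-isomorphic induced subgraphs, so the set $A_n$ of graphs with at most $n$ induced subgraphs is finite, and the injection $f_G$ maps each $A_n$ into, hence onto, itself, forcing $|D(f_G(G))| = |D(G)|$; equality of these finite cardinalities upgrades the inclusion to $D(f_G(G)) = f_G(D(G))$, after which your lattice argument (every hereditary class is a union of principal ones) does go through. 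With that step supplied your proof is complete, and it is in fact more careful than the paper's own, which needs the same lemma but never states or proves it.
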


\begin{proof}
    Let $\Pi$ and $\Gamma$ be two $\NPH$ graph problems, let $f: \Sigma^* \to \Sigma^*$ be a bi-reduction from $\Pi$ to $\Gamma$ and $f_G$ be the graph reduction under $f$.

    First, we show that every boundary class $\Pi$ is also a boundary class of $\Gamma$. Let $X$ be a $(\Pi, A)$-boundary, then we know $X = \bigcap_{i \in \mathbb{Z}^+} X_i$, where each $X_i \supseteq X_{i+1}$ and each $X_i$ is $\Pi$-Hard. Let $X'_i = f_G(X_i)$, then $X'_i$ is $\Gamma$-Hard. Now let $X' = \bigcap_{i \in \mathbb{Z}^+} f_G(X_i) = \bigcap_{i \in \mathbb{Z}^+} X'_i$, then $X'$ is a limit class for $\Gamma$.
    Now we prove the minimality of $X'$. Suppose there exists $Y' \subset X'$ such that $Y'$ is a limit class for $\Gamma$. Then $Y' = \bigcap_{i \in \mathbb{Z}^+} Y'_i$ of $\Gamma$-Hard classes $Y'_i$ where $Y'_i \supseteq Y'_{i+1}$. Then $Y = f_G^{-1}(Y') = f_G^{-1}\left(\bigcap_{i \in \mathbb{Z}^+} Y'_i\right) \subset f_G^{-1}(X')$ so $Y \subset X$ and $Y = f_G^{-1}\left(\bigcap_{i \in \mathbb{Z}^+} Y'_i\right) = \bigcap_{i \in \mathbb{Z}^+} f_G^{-1}\left(Y'_i\right)$ and as $Y'_i$ is $\Gamma$-Hard, $f_G^{-1}\left(Y'_i\right)$ is $\Pi$-Hard. So now $Y$ is a limit class for $\Pi$, but $Y \subset X$ which is a contradiction as $X$ is a boundary class for $\Pi$.

    Conversely, suppose $Y$ is a $(\Gamma, f_G(A))$-boundary. Then $Y = \bigcap_{i \in \mathbb{Z}^+} Y_i$, where each $Y_i \supseteq Y_{i+1}$ and each $Y_i$ is $\Gamma$-Hard.
    Now let $Y'_i = f_G^{-1}(Y_i)$, then $Y'_i$ is $\Pi$-Hard. Let $Y' = \bigcap_{i \in \mathbb{Z}^+} f^{-1}_G(Y_i) = \bigcap_{i \in \mathbb{Z}^+} Y'_i$, then $Y'$ is a limit class for $\Pi$.
    Now we prove the minimality of $Y'$. Suppose there exists $X' \subset Y'$ such that $X'$ is a limit class for $\Pi$.
    Then $X' = \bigcap_{i \in \mathbb{Z}^+} X'_i$ of $\Pi$-Hard classes $X'_i$ where $X'_i \supseteq X'_{i+1}$.
    Then $X = f_G(X') = f_G\left(\bigcap_{i \in \mathbb{Z}^+} X'_i\right) \subset f_G(Y')$ so $X \subset Y$ and $X = f_G\left(\bigcap_{i \in \mathbb{Z}^+} X'_i\right) = \bigcap_{i \in \mathbb{Z}^+} f_G\left(X'_i\right)$ and as $X'_i$ is $\Pi$-Hard, $f_G\left(X'_i\right)$ is $\Gamma$-Hard.
    So now $X$ is a limit class for $\Gamma$, but $X \subset Y$ which is a contradiction as $Y$ is a boundary class for $\Gamma$.

    Therefore, for a hereditary class of graphs $X$, $X$ is a $(\Pi, A)$-boundary if and only if $f_G(X)$ is a $(\Gamma, f_G(A))$-boundary.

  \end{proof}

This result gives us a relationship between existing known boundary classes for $\NPH$ graph problems which are reducible to each other. We now use this result to obtain a few new boundary classes for $\NPH$ graph problems.

\subsection{New boundary classes}
We use our result given in Section~\ref{sec: main result} to find some new boundary classes for $\NPH$ graph problems.  We apply Theorem~\ref{theorem3} on all the problems defined in the Section~{\ref{sec:problems}}. This shows the strength of our result in furthering the study of boundary classes for $\NPH$ graph problems.

First, we introduce some graph classes we will use in our results. $T_{i,j,k}$ is a graph of the form illustrated in Figure~\ref{fig:T}. These are trees with at most three leaves. $T$ is the class of graphs in which every connected component is of the form $T_{i,j,k}$ for some $i, j,k \in \mathbb{N}$.

A \emph{caterpillar with hairs} of arbitrary length is a \emph{subcubic tree} in which all cubic vertices belong to a single path. Figure~\ref{fig:Q} shows the form of a caterpillar with hair of arbitrary length. $Q$ is the class of graphs in which every connected component is a caterpillar with hair of arbitrary length.

For a graph $G = (V, E)$ the transformation $L(G)$ is called the \emph{line graph} of $G$. The vertex set of the transformation $L(G)$ is $E$ (vertices of $G$ become the edges of $L(G)$), and two vertices $e$ and $f$ in $L(G)$ are adjacent in $L(G)$ if and only if corresponding edges $e$ and $f$ in $G$ are adjacent in $G$. Figure~\ref{fig:L} shows an example of the line graph transformation. For a class of graphs $X$, $L(X)$ is defined as $L(X) = \{L(G) \mid G \in X\}$. The inverse $L^{-1}(X)$

The transformation $R(G)$ is defined as replacing every cubic vertex of $G$ with a complete graph of three vertices,~$K_3$. Figure~\ref{fig: R} illustrates an example of an $R(G)$ transformation. For a class of graphs $X$, $R(X)$ is defined as $R(X) = \{R(G) \mid G \in X\}$.

We defined the complement $co(G)$ of a graph $G$ as a graph where the vertex set of $co(G)$ is the same as the vertex set of $G$, and for every pair of vertices $v, u \in V$, there is an edge between $u$ and $v$ in $co(G)$ if and only if there is no edge between $u$ and $v$ in $G$. Figure~\ref{fig:co} shows complement of $T_{1, 1, 1}$. For a class of graphs $X$, $co(X)$ is defined as $co(X) = \{co(G) \mid G \in X\}$.

The transformation $K(G)$ denotes the weighted complete graph of $|V|$ vertices obtained from $G$ such that for each edge $e$ in $E(K(G))$ the weight of $e$ is $0$ if $e$ is in $E(G)$ else the weight of $e$ is $1$. Figure~\ref{fig:K} shows an example of the $K(G)$ transformation. For a class of graphs $X$, $K(X) = \{K(G) \mid G\in X\}$

\begin{figure}[!tbh]
\centering
\resizebox{5cm}{5cm}{
\begin{tikzpicture}

    \node[style={fill=black,circle}] (0) at (4,0){};
    \node[style={fill=black,circle}] (1) at (4,1.5)[label=left:$1$]{};
    \node[style={fill=black,circle}] (2) at (4,3)[label=left:$2$]{};
    \node (3) at (4,4.5){$\vdots$}; 
    \node[style={fill=black,circle}] (4) at (4,6)[label=left:$i-1$]{};
    \node[style={fill=black,circle}] (5) at (4,7.5)[label=left:$i$]{};

    \node[style={fill=black,circle}] (6) at (2.7,-0.75)[label=below:$1$]{};
    \node[style={fill=black,circle}] (7) at (1.4,-1.5)[label=below:$2$]{};
    \node (8) at (0.102886,-2.25){$\udots$}; 
    \node[style={fill=black,circle}] (9) at (-1.196,-3)[label=below:$j-1$]{};
    \node[style={fill=black,circle}] (10) at (-2.495,-3.75)[label=below:$j$]{};

    \node[style={fill=black,circle}] (11) at (5.299,-0.75)[label=above:$1$]{};
    \node[style={fill=black,circle}] (12) at (6.598,-1.5)[label=above:$2$]{};
    \node (13) at (7.897,-2.25){$\ddots$}; 
    \node[style={fill=black,circle}] (14) at (9.196,-3)[label=above:$k-1$]{};
    \node[style={fill=black,circle}] (15) at (10.495,-3.75)[label=above:$k$]{};

    \draw[black,very thick] (0)--(1) (1)--(2) (2)--(3) (3)--(4) (4)--(5) (0)--(6) (6)--(7) (7)--(8) (8)--(9) (9)--(10)
    (0)--(11) (11)--(12) (12)--(13) (13)--(14) (14)--(15)
    ;

\end{tikzpicture}}   
\caption{A graph $T_{i,j,k}$ for some $i, j,k \in \mathbb{N}$}   
\label{fig:T}
\end{figure}
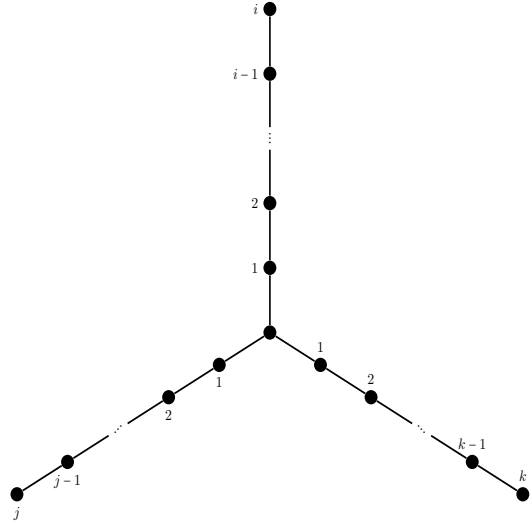

\begin{figure}[!tbh]
\centering
\resizebox{10cm}{3cm}{
\begin{tikzpicture}

    \node (0) at (0,0){$\dots$}; 
    \node[style={fill=black,circle}] (1) at (4,0)[label=below:$1$]{};
    \node[style={fill=black,circle}] (2) at (8,0)[label=below:$1$]{};
    \node[style={fill=black,circle}] (3) at (12,0)[label=below:$1$]{};

    \node[style={fill=black,circle}] (4) at (4,1.5)[label=left:$2$]{};
    \node[style={fill=black,circle}] (5) at (8,1.5)[label=left:$2$]{};
    \node[style={fill=black,circle}] (6) at (12,1.5)[label=left:$2$]{};

    \node (7) at (4,3){$\vdots$}; 
    \node (8) at (8,3){$\vdots$}; 
    \node (9) at (12,3){$\vdots$}; 

    \node[style={fill=black,circle}] (10) at (4,4.5)[label=left:$i_{n-2}-1$]{};
    \node[style={fill=black,circle}] (11) at (8,4.5)[label=left:$i_{n-1}-1$]{};
    \node[style={fill=black,circle}] (12) at (12,4.5)[label=left:$i_n-1$]{};

    \node[style={fill=black,circle}] (13) at (4,6)[label=left:$i_{n-2}$]{};
    \node[style={fill=black,circle}] (14) at (8,6)[label=left:$i_{n-1}$]{};
    \node[style={fill=black,circle}] (15) at (12,6)[label=left:$i_n$]{};

    \node[style={fill=black,circle}] (-1) at (-4,0)[label=below:$1$]{};
    \node[style={fill=black,circle}] (-2) at (-8,0)[label=below:$1$]{};
    \node[style={fill=black,circle}] (-3) at (-12,0)[label=below:$1$]{};

    \node[style={fill=black,circle}] (-4) at (-4,1.5)[label=right:$2$]{};
    \node[style={fill=black,circle}] (-5) at (-8,1.5)[label=right:$2$]{};
    \node[style={fill=black,circle}] (-6) at (-12,1.5)[label=right:$2$]{};

    \node (-7) at (-4,3){$\vdots$}; 
    \node (-8) at (-8,3){$\vdots$}; 
    \node (-9) at (-12,3){$\vdots$}; 

    \node[style={fill=black,circle}] (-10) at (-4,4.5)[label=right:$i_{3}-1$]{};
    \node[style={fill=black,circle}] (-11) at (-8,4.5)[label=right:$i_{2}-1$]{};
    \node[style={fill=black,circle}] (-12) at (-12,4.5)[label=right:$i_{1}-1$]{};

    \node[style={fill=black,circle}] (-13) at (-4,6)[label=right:$i_{3}$]{};
    \node[style={fill=black,circle}] (-14) at (-8,6)[label=right:$i_{2}$]{};
    \node[style={fill=black,circle}] (-15) at (-12,6)[label=right:$i_{1}$]{};

    \draw[black,very thick] (0)--(1) (1)--(2) (2)--(3) (1)--(4) (2)--(5) (3)--(6)
    (7)--(4) (8)--(5) (9)--(6) (7)--(10) (8)--(11) (9)--(12) (13)--(10) (14)--(11) (15)--(12)
    (0)--(-1) (-1)--(-2) (-2)--(-3) (-1)--(-4) (-2)--(-5) (-3)--(-6)
    (-7)--(-4) (-8)--(-5) (-9)--(-6) (-7)--(-10) (-8)--(-11) (-9)--(-12) (-13)--(-10) (-14)--(-11) (-15)--(-12)
    ;

\end{tikzpicture}}     
\caption{A caterpillar with hair of arbitrary length for some $n \in \mathbb{N}$ and where each $i_k \in \mathbb{N}$.}   
\label{fig:Q}
\end{figure}
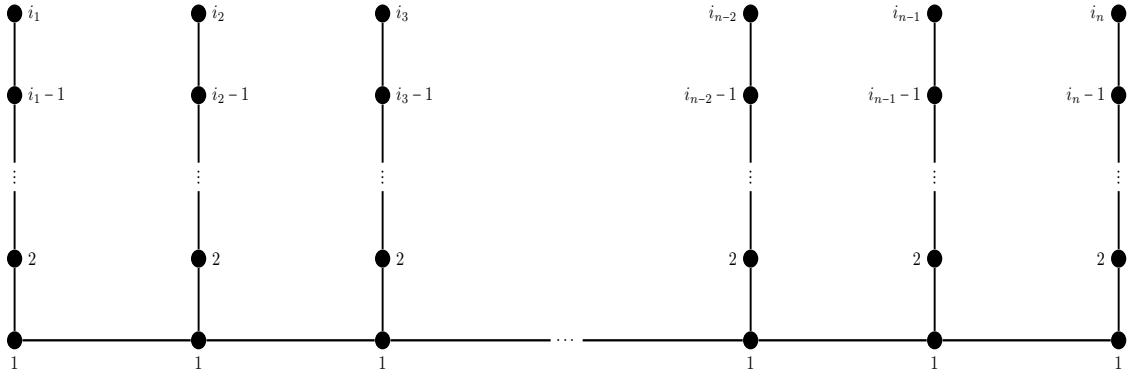

\begin{figure}[!tbh]
\centering
\resizebox{7cm}{3.5cm}{
\begin{tikzpicture}

    \node[style={fill=black,circle}] (0) at (4,0){};
    \node[style={fill=black,circle}] (1) at (4,1.5){};
    \node[style={fill=black,circle}] (2) at (4,3){};

    \node[style={fill=black,circle}] (3) at (2.7,-0.75){};
    \node[style={fill=black,circle}] (4) at (1.4,-1.5){};

    \node[style={fill=black,circle}] (5) at (5.299,-0.75){};
    \node[style={fill=black,circle}] (6) at (6.598,-1.5){};

    \draw[black,very thick] (0)--(1) (1)--(2) (0)--(3) (3)--(4) (0)--(5) (5)--(6);

    \node (a) at (8,2){\Large$\longrightarrow$}; 

    \node[style={fill=black,circle}] (7) at (12,1.5){};
    \node[style={fill=black,circle}] (8) at (12,3){};

    \node[style={fill=black,circle}] (9) at (10.7,-0.75){};
    \node[style={fill=black,circle}] (10) at (9.4,-1.5){};

    \node[style={fill=black,circle}] (11) at (13.299,-0.75){};
    \node[style={fill=black,circle}] (12) at (14.598,-1.5){};

    \draw[black,very thick] (7)--(8) (9)--(10) (11)--(12) (7)--(9) (9)--(11) (11)--(7);

\end{tikzpicture}   
}   
\caption{An example of the line graph transformation $L(G)$.}   
\label{fig:L}
\end{figure}
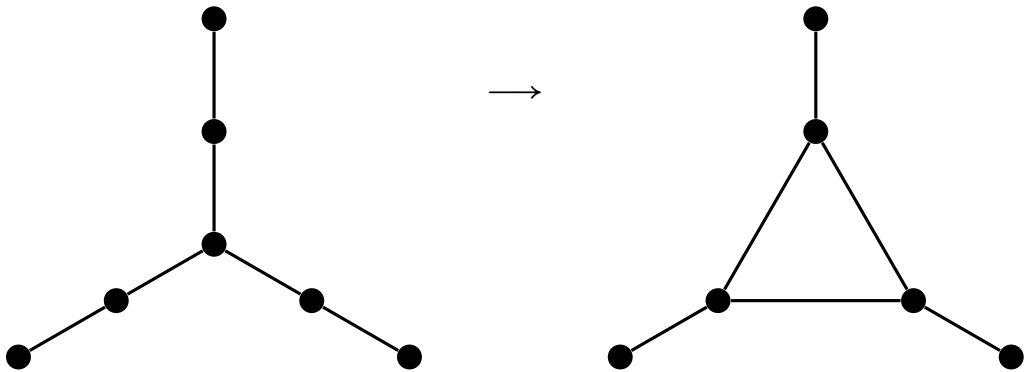

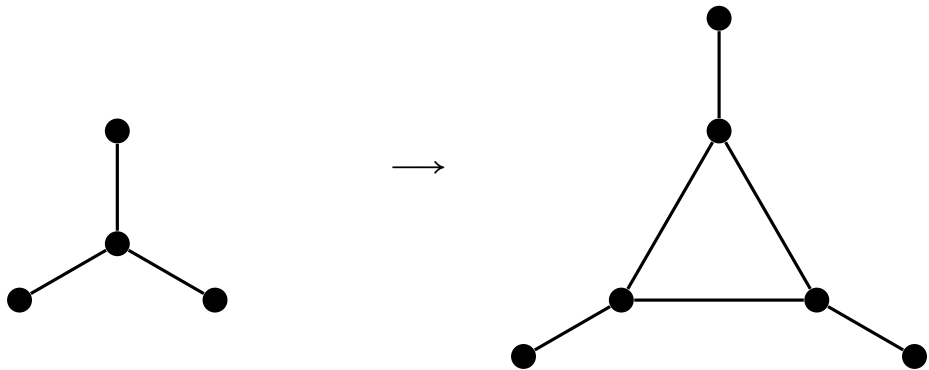
\begin{figure}[!tbh]
\centering
\resizebox{6cm}{3cm}{
\begin{tikzpicture}

    \node[style={fill=black,circle}] (0) at (4,0){};
    \node[style={fill=black,circle}] (1) at (4,1.5){};

    \node[style={fill=black,circle}] (3) at (2.7,-0.75){};

    \node[style={fill=black,circle}] (5) at (5.299,-0.75){};

    \draw[black,very thick] (0)--(1) (0)--(3) (0)--(5);

    \node (a) at (8,1){\Large$\longrightarrow$}; 

    \node[style={fill=black,circle}] (7) at (12,1.5){};
    \node[style={fill=black,circle}] (8) at (12,3){};

    \node[style={fill=black,circle}] (9) at (10.7,-0.75){};
    \node[style={fill=black,circle}] (10) at (9.4,-1.5){};

    \node[style={fill=black,circle}] (11) at (13.299,-0.75){};
    \node[style={fill=black,circle}] (12) at (14.598,-1.5){};

    \draw[black,very thick] (7)--(8) (9)--(10) (11)--(12) (7)--(9) (9)--(11) (11)--(7);

\end{tikzpicture}      
}
\caption{Example of transformation $R(G)$}   
\label{fig: R}
\end{figure}

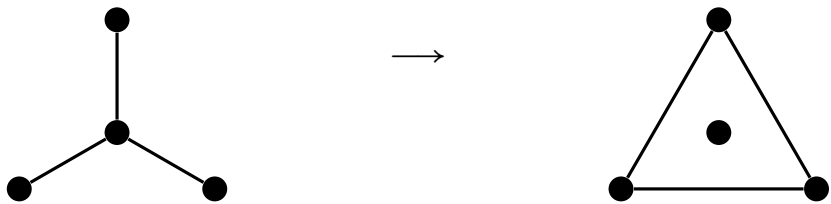
\begin{figure}[!tbh]
\centering
\resizebox{9cm}{3cm}{
\begin{tikzpicture}

    \node[style={fill=black,circle}] (0) at (4,0){};
    \node[style={fill=black,circle}] (1) at (4,1.5){};

    \node[style={fill=black,circle}] (3) at (2.7,-0.75){};

    \node[style={fill=black,circle}] (5) at (5.299,-0.75){};

    \draw[black,very thick] (0)--(1) (0)--(3) (0)--(5);

    \node (a) at (8,1){\Large$\longrightarrow$}; 

    \node[style={fill=black,circle}] (14) at (12,0){};
    \node[style={fill=black,circle}] (7) at (12,1.5){};

    \node[style={fill=black,circle}] (9) at (10.7,-0.75){};

    \node[style={fill=black,circle}] (11) at (13.299,-0.75){};

    \draw[black,very thick]  (7)--(9) (9)--(11) (11)--(7);

\end{tikzpicture}    
}
\caption{An example of the complement transformation $co(G)$}   
\label{fig:co}
\end{figure}

\begin{figure}[!tbh]
    \centering
    \resizebox{10cm}{3cm}{
    \begin{tikzpicture}
    
        \node[style={fill=black,circle}] (1) at (4,2.5){};
    \node[style={fill=black,circle}] (2) at (1.835,1.25){};
    \node[style={fill=black,circle}] (3) at (2.233,-1.767){};
    \node[style={fill=black,circle}] (4) at (5.767,-1.767){};
    \node[style={fill=black,circle}] (5) at (6.165,1.25){};
    \draw[black,very thick] 
    (1)--(3) 
    (1)--(4) 
    (2)--(4) 
    (2)--(5) 
    (3)--(5) 
    ;
    
    \node (arrow) at (10.5,1){\Large$\longrightarrow$}; 
    
    \node[style={fill=black,circle}] (a) at (17,2.5){};
    \node[style={fill=black,circle}] (b) at (14.835,1.25){};
    \node[style={fill=black,circle}] (c) at (15.233,-1.767){};
    \node[style={fill=black,circle}] (d) at (18.767,-1.767){};
    \node[style={fill=black,circle}] (e) at (19.165,1.25){};
    \draw[black,very thick] (a)--(b) node [midway,above] {1};
    \draw[black,very thick] (a)--(c) node [near end,above] {0};
    \draw[black,very thick] (a)--(d) node [near end, right] {0};
    \draw[black,very thick] (a)--(e) node [midway,above] {1};
    \draw[black,very thick] (b)--(c) node [midway,left] {1};
    \draw[black,very thick] (b)--(d) node [near end, below] {0};
    \draw[black,very thick] (b)--(e) node [midway,below] {0};
    \draw[black,very thick] (c)--(d) node [midway,below] {1};
    \draw[black,very thick] (c)--(e) node [near start,below] {0};
    \draw[black,very thick] (d)--(e) node [midway,right] {1};
        
    \end{tikzpicture}
    }   
    \caption{An example of the complete graph transformation $K(G)$}   
    \label{fig:K}
    \end{figure}
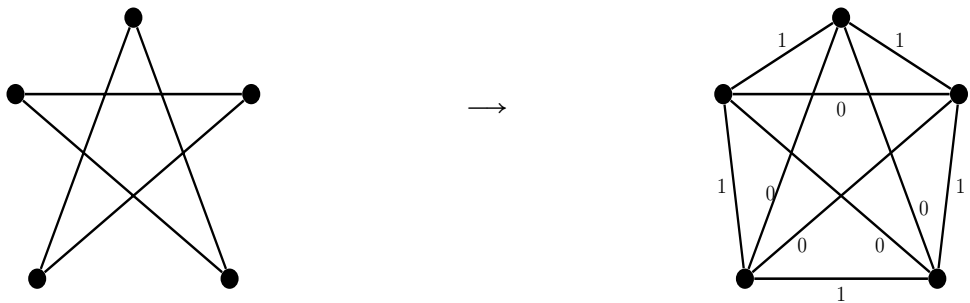

We now apply Theorem~\ref{theorem3} to obtain some new previously unknown boundary classes. 
In the following we will use the boundary class $T$ for the problem \Prob{independent-set} as defined by Alekseev in \cite{Alekseev03}. 
Nicholas~\etal showed in \cite{Lozin11} that both $Q$ and $R(Q)$ are boundary classes for the problem \Prob{hamiltonian-cycle}, and Munaro showed in \cite{Munaro17} that $Q$ is a boundary class for \Prob{hamiltonian-path}. 
In \cite{Malyshev11}, Malyshev showed that $co(L(T))$ is a boundary class for \Prob{vertex-coloring}.

\begin{corollary}\label{col1}
$T$ is a boundary class for \Prob{vertex-cover}.
\end{corollary}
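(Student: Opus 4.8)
The plan is to apply Theorem~\ref{theorem3} with $\Pi = {}$\Prob{independent-set} and $\Gamma = {}$\Prob{vertex-cover}. Since $T$ is already known to be a boundary class of \Prob{independent-set} (Alekseev \cite{Alekseev03}), it suffices to exhibit a bi-reduction $f$ from \Prob{independent-set} to \Prob{vertex-cover} whose graph part $f_G$ is hereditary closed and satisfies $f_G(T) = T$. The conclusion that $T = f_G(T)$ is a boundary class of \Prob{vertex-cover} then follows at once.

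The reduction I would use rests on the classical Gallai identity $\alpha(G) + \beta(G) = |V(G)|$: a set $S$ is independent in $G$ if and only if $V(G)\setminus S$ is a vertex cover, so the complement of a maximum independent set is a minimum vertex cover. Concretely, I would define $f(\langle G, k\rangle) = \langle G,\, |V(G)| - k\rangle$, extending $f$ to act as the identity on strings that are not valid encodings of a graph–parameter pair. Then $\langle G,k\rangle \in {}$\Prob{independent-set} iff $\alpha(G) = k$ iff $\beta(G) = |V(G)| - k$ iff $f(\langle G,k\rangle) \in {}$\Prob{vertex-cover}, which is exactly the reduction condition.

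Next I would verify the hypotheses of Theorem~\ref{theorem3}. The map $f$ is computable in polynomial time, since it merely subtracts $k$ from the vertex count. It is a bijection because it is its own inverse: applying it twice returns $\langle G,\, |V(G)| - (|V(G)| - k)\rangle = \langle G,k\rangle$, so $f^{-1} = f$ is likewise polynomial-time computable. Crucially, $f$ alters only the numerical parameter and leaves the graph untouched, so its graph part $f_G$ is the identity map on graphs. Hence $f_G(X) = X$ for every class $X$; in particular $f_G$ sends every hereditary class to itself, so $f_G$ is hereditary closed, and $f_G(T) = T$.

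With all hypotheses in place, Theorem~\ref{theorem3} gives that $X$ is a boundary class of \Prob{independent-set} if and only if $f_G(X)$ is a boundary class of \Prob{vertex-cover}. Taking $X = T$, combining $f_G(T) = T$ with Alekseev's result, yields the claim. I do not expect any substantive obstacle: the only points requiring care are the bookkeeping needed to make $f$ a genuine bijection on all of $\Sigma^*$ (handled by letting it fix malformed inputs) and the observation that the graph component of the reduction is literally the identity, which renders the hereditary-closed condition immediate.
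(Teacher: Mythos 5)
Your proposal is correct and follows essentially the same route as the paper: the reduction $f(\langle G,k\rangle)=\langle G,|V(G)|-k\rangle$ via the identity $\alpha(G)+\beta(G)=|V(G)|$, with the graph part being the identity map so that hereditary closedness and $f_G(T)=T$ are immediate. Your added care about extending $f$ to a genuine bijection on all of $\Sigma^*$ and noting $f^{-1}=f$ is a minor refinement the paper glosses over, not a different approach.
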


\begin{proof}
Since $T$ is a boundary class for \Prob{independent-set}. For a graph $G = (V,E)$ we have that $\alpha(G) + \beta(G) = |V|$ (see \cite{West01}).
So we have that the function $f: \Sigma^* \to \Sigma^*$, such that $f(\langle G, k \rangle) = \langle G, |V|-k \rangle$ is a bi-reduction from \Prob{independent-set} to \Prob{vertex-cover} where $G$ is a graph and $k$ is a natural number. 
The graph reduction $f_G$ under $f$ is hereditary closed as the mapping maps every graph to itself. Therefore, by Theorem~\ref{theorem3}, $T$ is a boundary class for \Prob{vertex-cover}.
\end{proof}

\begin{corollary}\label{col2}
$co(T)$ is a boundary class for \Prob{clique}.
\end{corollary}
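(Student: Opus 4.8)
The plan is to follow the template of Corollary~\ref{col1} and invoke Theorem~\ref{theorem3}, but now exploiting the classical duality between independent sets and cliques under complementation: a set of vertices is independent in $G$ exactly when it induces a clique in $co(G)$, so $\alpha(G) = \omega(co(G))$. Since $T$ is already known to be a boundary class for \Prob{independent-set}, it will suffice to produce a bi-reduction from \Prob{independent-set} to \Prob{clique} whose graph part is the complement transformation $co$, and then to check that $co$ is hereditary closed.

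First I would define $f : \Sigma^* \to \Sigma^*$ by $f(\langle G, k\rangle) = \langle co(G), k\rangle$ and verify it is a bi-reduction. The correctness condition $\langle G, k\rangle \in \Prob{independent-set} \iff f(\langle G, k\rangle) \in \Prob{clique}$ follows immediately from $\alpha(G) = \omega(co(G))$, since the left side asserts $\alpha(G) = k$ and the right side asserts $\omega(co(G)) = k$. Bijectivity is clear because complementation is an involution, $co(co(G)) = G$, which makes $f$ its own inverse; in particular both $f$ and $f^{-1} = f$ are polynomial-time computable, as forming the complement takes time quadratic in the number of vertices. The associated graph reduction is then simply $f_G = co$.

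The substance of the argument is showing that $co$ is hereditary closed, i.e.\ that $co(X)$ is hereditary whenever $X$ is. I would argue this through the observation that complementation commutes with taking induced subgraphs: for any $U \subseteq V(G)$ one has $co(G)[U] = co(G[U])$, because a pair of vertices of $U$ is adjacent in $co(G)[U]$ precisely when it is non-adjacent in $G[U]$. Given $H \in co(X)$, write $H = co(G)$ with $G \in X$; any induced subgraph $H[U] = co(G[U])$ is then the complement of $G[U] \in X$ (using that $X$ is hereditary), and hence lies in $co(X)$. This establishes the hereditary closure of $co$.

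With the bi-reduction and its hereditary-closed graph part in hand, Theorem~\ref{theorem3} applies directly: since $T$ is a boundary class for \Prob{independent-set}, its image $co(T) = \{co(G) \mid G \in T\}$ is a boundary class for \Prob{clique}. I expect the only mild subtlety to be the induced-subgraph/complement commutation, which is exactly where the hereditary-closed hypothesis of Theorem~\ref{theorem3} is being verified; everything else is bookkeeping analogous to Corollary~\ref{col1}.
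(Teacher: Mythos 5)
Your proposal is correct and follows essentially the same route as the paper: the reduction $f(\langle G,k\rangle)=\langle co(G),k\rangle$, correctness via $\alpha(G)=\omega(co(G))$, hereditary closure of the complement map, and then Theorem~\ref{theorem3}. You are slightly more careful than the paper in spelling out that $f$ is an involution (hence bijective with polynomial-time inverse) and in verifying hereditary closure for arbitrary induced subgraphs via $co(G)[U]=co(G[U])$ rather than just single-vertex deletions, but these are refinements of the same argument, not a different one.
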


\begin{proof}
Since $T$ is a boundary class for \Prob{independent-set}. For a graph $G = (V,E)$ we have that $\alpha(G) = \omega(co(G))$ (see \cite{West01}). 
So we have that the function $f: \Sigma^* \to \Sigma^*$, such that $f(\langle G, k \rangle) = \langle co(G), k \rangle$ is a bi-reduction from \Prob{independent-set} to \Prob{clique} where $G$ is a graph, $k$ is a natural number and $co(G)$ is the complement graph of $G$.
The graph reduction $f_G$ under $f$ is hereditary closed as, if we remove a vertex $v$ from $co(G) \in co(T)$ then the graph $co(G)-v$ obtained is the image of $(G-v)\in T$ under $f_G$ and so $(co(G)-v) \in co(T)$.
Therefore, by Theorem~\ref{theorem3}, $co(T)$ is a boundary class for \Prob{clique}.
\end{proof}

\begin{corollary}\label{col3}
$co(T)$ is a boundary class for \Prob{subgraph-isomorphism}.
\end{corollary}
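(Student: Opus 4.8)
The plan is to deduce $co(T)$ as a boundary class for \Prob{subgraph-isomorphism} directly from Corollary~\ref{col2}, which already establishes $co(T)$ as a boundary class for \Prob{clique}, by exhibiting a bi-reduction from \Prob{clique} to \Prob{subgraph-isomorphism} whose graph part $f_G$ is the identity on graphs. If this succeeds, then $f_G$ is trivially hereditary closed and $f_G(co(T)) = co(T)$, so Theorem~\ref{theorem3} yields the conclusion at once.

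Concretely, I would set $f(\langle G, k \rangle) = \langle G, K_k \rangle$, turning the question ``does $G$ have a clique of size $k$?'' into the question ``does $G$ contain a subgraph isomorphic to the complete graph $K_k$?''. The correctness of $f$ as a reduction is the elementary fact that $G$ has a clique of size $k$ exactly when $G$ contains a subgraph isomorphic to $K_k$. Because $f$ never alters the graph component of an instance, the induced graph reduction $f_G$ is the identity, so it sends every hereditary class to itself and in particular $f_G(co(T)) = co(T)$.

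The steps, in order, are: recall that $co(T)$ is a boundary class for \Prob{clique} by Corollary~\ref{col2}; check that $f$ and its inverse $\langle G, K_k \rangle \mapsto \langle G, k \rangle$ are polynomial-time computable and that $f$ is a bijection, so that $f$ is a bi-reduction; confirm the clique-versus-$K_k$ equivalence so that $f$ is a valid reduction; note that $f_G$ is the identity and hence hereditary closed; and finally apply Theorem~\ref{theorem3}.

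The main obstacle is a definitional mismatch that must be addressed before $f$ is valid. The query ``$G$ contains a subgraph isomorphic to $K_k$'' is equivalent to $\omega(G) \geq k$, a monotone existential condition, whereas \Prob{clique} is written in Section~\ref{sec:problems} with the exact condition $\omega(G) = k$, which additionally forbids $K_{k+1}$ and therefore cannot be captured by any single subgraph-isomorphism query while $f_G$ is kept the identity. I would resolve this by using the containment form of the clique problem (``$G$ contains a clique of size $k$'', as the preliminaries state it): it is polynomial-time equivalent, and by repeating the argument of Corollary~\ref{col2} with $\alpha(G) \geq k \iff \omega(co(G)) \geq k$ it still has $co(T)$ as a boundary class. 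A secondary, purely syntactic point is that \Prob{clique} instances have the form $\langle \text{graph}, \text{number}\rangle$ while \Prob{subgraph-isomorphism} instances have the form $\langle \text{graph}, \text{graph}\rangle$, so $f$ as written is injective but not onto; to satisfy the bijection requirement of Theorem~\ref{theorem3} I would fix an encoding and extend $f$ to the remaining strings as the identity, exactly as the bijections in the preceding corollaries are tacitly handled. Once these points are settled, the identity graph map makes the conclusion immediate.
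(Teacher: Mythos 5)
Your proposal follows essentially the same route as the paper's own proof: it invokes Corollary~\ref{col2}, uses the reduction $f(\langle G, k\rangle) = \langle G, K_k\rangle$ whose graph part is the identity (hence hereditary closed), and concludes via Theorem~\ref{theorem3}. Your additional remarks on the exact-versus-containment phrasing of \Prob{clique} and on extending $f$ to a genuine bijection address real gaps that the paper's proof silently glosses over, so they strengthen rather than alter the argument.
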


\begin{proof}
    From Corollary~\ref{col2} we know that $co(T)$ is a boundary class for \Prob{clique}. From \cite{Garey90} we have that $\Prob{clique} \leq_p \Prob{subgraph-isomorphism}$. 
    So we have that the function $f: \Sigma^* \to \Sigma^*$, such that $f(\langle G, n \rangle) = \langle G, K_n \rangle$ is a bi-reduction from \Prob{clique} to \Prob{subgraph-isomorphism}where $G$ is a graph, $n$ is a natural number and $K_n$ is the complete graph of $n$ vertices. 
    The graph reduction $f_G$ under $f$ is hereditary closed as the mapping maps every graph to itself. Therefore, by Theorem~\ref{theorem3}, $co(T)$ is a boundary class for \Prob{subgraph-isomorphism}.
\end{proof}

\begin{corollary}\label{col4}
    $K(Q)$ is a boundary class for \Prob{traveling-salesperson}.
\end{corollary}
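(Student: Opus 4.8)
The plan is to apply Theorem~\ref{theorem3} with source problem \Prob{hamiltonian-cycle}, whose boundary class $Q$ is established in \cite{Lozin11}, and target problem \Prob{traveling-salesperson}, taking the transformation $K$ as the graph reduction. First I would define $f: \Sigma^* \to \Sigma^*$ by $f(\langle G \rangle) = \langle K(G), W, 0 \rangle$, where $W$ is the weight function of the $K(G)$ transformation (so $W(e) = 0$ when $e \in E(G)$ and $W(e) = 1$ otherwise) and the threshold is set to $0$. This is the classical reduction specialized to the $K$-transformation already introduced in the preliminaries.

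The correctness is the standard argument: since $K(G)$ is complete it always contains Hamiltonian cycles, and because every weight lies in $\{0,1\}$, a Hamiltonian cycle of total weight at most $0$ must use only weight-$0$ edges, which are precisely the edges of $G$. Hence $\langle G \rangle \in \Prob{hamiltonian-cycle}$ if and only if $f(\langle G \rangle) \in \Prob{traveling-salesperson}$. The inverse recovers $G$ from $K(G)$ by keeping exactly the weight-$0$ edges, so both $f$ and $f^{-1}$ are polynomial-time computable and $f_G = K$.

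Next I would verify that $f_G = K$ is hereditary closed, which is the core of the argument and proceeds exactly as the complement case in Corollary~\ref{col2}. Deleting a vertex $v$ from $K(G)$ removes $v$ and all incident edges and leaves the weighted complete graph on $V(G)\setminus\{v\}$ with weights inherited unchanged; this is precisely $K(G-v)$. Consequently, for any hereditary class $X$ and any $K(G) \in K(X)$, every induced subgraph of $K(G)$ has the form $K(G')$ for an induced subgraph $G'$ of $G$, and since $G' \in X$ we obtain $K(G') \in K(X)$. Thus $K(X)$ is hereditary, and $K(Q)$ in particular, so Theorem~\ref{theorem3} gives that $K(Q)$ is a boundary class for \Prob{traveling-salesperson}.

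The main obstacle is the bijectivity clause in the definition of bi-reducibility: $f$ is not onto all of $\Sigma^*$, since its image consists only of complete graphs carrying $\{0,1\}$-valued weights with threshold $0$, so strictly speaking $f$ is a bijection onto this image rather than onto the full instance space of \Prob{traveling-salesperson}. I would handle this exactly as in Corollaries~\ref{col2} and~\ref{col3}, reading the bi-reduction as a bijection between the source graph class and its image $K(X)$ under $f_G$; the hereditary-closed argument and the limit-class manipulations in the proof of Theorem~\ref{theorem3} invoke only $f_G$ and $f_G^{-1}$ on these graph classes, so restricting attention to the image suffices. One minor caveat, harmless for boundary classes since they are infinite families, is that the completeness argument requires graphs on at least three vertices.
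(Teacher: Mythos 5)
Your proposal follows essentially the same route as the paper: apply Theorem~\ref{theorem3} to the classical reduction $\langle G\rangle \mapsto \langle K(G), 0\rangle$ from \Prob{hamiltonian-cycle} to \Prob{traveling-salesperson}, with the hereditary-closedness verified via the identity $K(G)-v = K(G-v)$. Your write-up is in fact more careful than the paper's own proof --- you spell out the correctness of the reduction and explicitly flag the surjectivity gap in the bi-reduction definition, which the paper leaves unaddressed --- but the underlying argument is the same.
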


\begin{proof}
    Since $Q$ is a boundary class for the \Prob{hamiltonian-cycle}. It is well known that $\Prob{hamiltonian-cycle} \leq_p \Prob{traveling-salesperson}$. 
    So we have that the function $f: \Sigma^* \to \Sigma^*$, such that $f(\langle G \rangle) = \langle K(G), 0 \rangle$ is a bi-reduction from \Prob{hamiltonian-cycle} to \Prob{traveling-salesperson} where $G$ is a graph, and $K(G)$ is the graph obtained from $G$ by $K(\cdot)$ transformation. 
    The graph reduction $f_G$ under $f$ is hereditary closed as, if we remove a vertex $v$ from $K(G) \in K(Q)$ then the graph $(K(G)-v)$ obtained is the image of $(G-v)\in Q$ under $f_G$ and so $(K(G)-v) \in K(Q)$. Therefore, by Theorem~\ref{theorem3}, $K(Q)$ is a boundary class for \Prob{traveling-salesperson}.
\end{proof}

\begin{corollary}\label{col5}
    $K(R(Q))$ is a boundary class for \Prob{traveling-salesperson}.
\end{corollary}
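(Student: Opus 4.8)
The plan is to mirror the argument of Corollary~\ref{col4} verbatim, replacing the starting boundary class $Q$ with $R(Q)$. The only new input needed is that $R(Q)$ is itself a boundary class for \Prob{hamiltonian-cycle}, which was established in \cite{Lozin11} alongside the corresponding statement for $Q$. Being a boundary class, $R(Q)$ is in particular hereditary: it is a limit class, hence an intersection of $\Pi$-hard (and therefore hereditary) classes, and an intersection of hereditary classes is hereditary.

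First I would reuse the same reduction $f \colon \Sigma^* \to \Sigma^*$ given by $f(\langle G \rangle) = \langle K(G), 0 \rangle$ that witnesses $\Prob{hamiltonian-cycle} \leq_p \Prob{traveling-salesperson}$: a graph $G$ has a Hamiltonian cycle if and only if the weighted complete graph $K(G)$ admits a Hamiltonian cycle of total weight at most $0$, since the weight-$0$ edges of $K(G)$ are exactly the edges of $G$. I would then note that $f$ is a bi-reduction in the sense required by Theorem~\ref{theorem3} --- it is bijective, and both $f$ and $f^{-1}$ are polynomial-time computable, as $K(\cdot)$ and its inverse merely read off the edge set from the weight-$0$ edges --- exactly as in Corollary~\ref{col4}.

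Next I would verify that the graph reduction $f_G = K(\cdot)$ is hereditary closed when applied to $R(Q)$. The crucial observation is that $K(\cdot)$ preserves the vertex set, so vertex deletion commutes with the transformation: for any graph $H$ and vertex $v$ one has $K(H) - v = K(H-v)$. Consequently, if we delete a vertex $v$ from some $K(H) \in K(R(Q))$, the resulting graph $K(H) - v = K(H-v)$ is the image under $K(\cdot)$ of $H - v$, and $H - v \in R(Q)$ because $R(Q)$ is hereditary; hence $K(H) - v \in K(R(Q))$, so $K(R(Q))$ is hereditary and $f_G$ is hereditary closed.

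Finally, having checked both hypotheses of Theorem~\ref{theorem3}, I would invoke it directly: since $R(Q)$ is a boundary class for \Prob{hamiltonian-cycle} and $f$ is a bi-reduction to \Prob{traveling-salesperson} whose graph reduction $f_G$ is hereditary closed, it follows that $f_G(R(Q)) = K(R(Q))$ is a boundary class for \Prob{traveling-salesperson}. I expect the only step requiring any genuine care to be the hereditary-closed verification, but this reduces entirely to the commutation identity $K(H) - v = K(H-v)$ together with the hereditariness of $R(Q)$; every other step is identical to the proof of Corollary~\ref{col4}, so no new obstacle arises.
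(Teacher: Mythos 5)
Your proposal matches the paper's own proof essentially verbatim: the same reduction $f(\langle G \rangle) = \langle K(G), 0 \rangle$, the same appeal to $R(Q)$ being a boundary class for \Prob{hamiltonian-cycle}, the same hereditary-closedness check via deleting a vertex from $K(G)$, and the same invocation of Theorem~\ref{theorem3}. Your explicit commutation identity $K(H) - v = K(H - v)$ is just a cleaner statement of the argument the paper already gives, so there is nothing further to add.
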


\begin{proof}
    Since $R(Q)$ is also a boundary class for the \Prob{hamiltonian-cycle} and it is well known that 
    \\$\Prob{hamiltonian-cycle} \leq_p \Prob{traveling-saleperson}$. 
    So we have that the function $f: \Sigma^* \to \Sigma^*$, such that $f(\langle G \rangle) = \langle K(G), 0 \rangle$ is a bi-reduction from \Prob{hamiltonian-cycle} to \Prob{traveling-saleperson} where $G$ is a graph, and $K(G)$ is the graph obtained from $G$ by the $K(\cdot)$ transformation. 
    The graph reduction $f_G$ under $f$ is hereditary closed as, if we remove a vertex $v$ from $K(G) \in K(R(Q))$ then the graph $(K(G)-v)$ obtained is the image of $(G-v) \in R(Q)$ under $f_G$ and so $(K(G)-v) \in K(R(Q))$. Therefore, by Theorem~\ref{theorem3}, $K(Q)$ is a boundary class for \Prob{traveling-saleperson}.
\end{proof}

\begin{corollary}\label{col6}
    $L(T)$ is a boundary class for \Prob{clique-cover}.
  \end{corollary}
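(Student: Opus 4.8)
The plan is to apply Theorem~\ref{theorem3} in exactly the same pattern as the previous corollaries, this time starting from the boundary class $co(L(T))$ that Malyshev established for \Prob{vertex-coloring} and transporting it to \Prob{clique-cover} through the complement transformation $co$. The key structural fact I would use is that a partition of $V(G)$ into cliques is precisely a partition of $V(co(G))$ into independent sets, i.e.\ a proper coloring of $co(G)$; hence the minimum clique-cover size of $co(G)$ equals $\chi(G)$. Consequently $\langle G, k\rangle \in \Prob{vertex-coloring}$ if and only if $\langle co(G), k\rangle \in \Prob{clique-cover}$.

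First I would define $f : \Sigma^* \to \Sigma^*$ by $f(\langle G, k\rangle) = \langle co(G), k\rangle$ and verify that it is a bi-reduction from \Prob{vertex-coloring} to \Prob{clique-cover}. Bijectivity together with polynomial-time computability of both $f$ and $f^{-1}$ follows from the fact that $co$ is a polynomial-time computable involution on graphs ($co(co(G)) = G$) and that $f$ leaves the parameter $k$ untouched; the correctness of the reduction is precisely the equivalence displayed above.

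Next I would identify the graph reduction $f_G = co$ and argue that it is hereditary closed. As in Corollary~\ref{col2}, this reduces to the identity $co(G) - v = co(G - v)$: if $X$ is hereditary and $co(G) \in co(X)$, then deleting any vertex $v$ yields $co(G) - v = co(G - v)$ with $G - v \in X$, so $co(G) - v \in co(X)$, whence $co(X)$ is hereditary.

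Finally I would invoke Theorem~\ref{theorem3}. Since $co(L(T))$ is a boundary class for \Prob{vertex-coloring}, its image $f_G(co(L(T))) = co(co(L(T))) = L(T)$ is a boundary class for \Prob{clique-cover}, which is the claim. The only step requiring genuine care is the first one: setting up the clique-cover/coloring duality so that the complement composes with itself and collapses $co(L(T))$ back to $L(T)$. Once that involution is pinned down, everything else is a routine instance of the template already used in Corollaries~\ref{col1}--\ref{col5}.
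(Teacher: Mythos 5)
Your proposal matches the paper's proof essentially verbatim: both use the reduction $f(\langle G, k\rangle) = \langle co(G), k\rangle$ from \Prob{vertex-coloring} to \Prob{clique-cover}, the clique-cover/coloring duality under complementation, the hereditary-closedness of $f_G = co$ via $co(G) - v = co(G - v)$, and Theorem~\ref{theorem3} applied to Malyshev's boundary class $co(L(T))$. If anything, you are slightly more explicit than the paper in spelling out that $co$ is an involution so that $f_G(co(L(T))) = L(T)$, which is a welcome clarification rather than a deviation.
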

  \begin{proof}
    Since $co(L(T))$ is a boundary class for \Prob{vertex-coloring}. 
    Karp showed in his 1972 paper that $\Prob{vertex-coloring} \leq_p \Prob{clique-cover}$ \cite{Karp72}. 
    For a graph $G = (V, E)$ we have that $G$ is $k$-colorable if and only if $co(G)$ has a clique cover of size $k$.
    So we have that the function $f: \Sigma^* \to \Sigma^*$, such that $f(\langle G, k \rangle) = \langle co(G), k \rangle$ is a bi-reduction from \Prob{vertex-coloring} to \Prob{clique-cover} where $G$ is a graph, $k$ is a natural number and $co(G)$ is the complement graph of $G$.
    The graph reduction $f_G$ under $f$ is hereditary closed as, if we remove a vertex $v$ from $co(G) \in L(T)$ then the graph $co(G)-v$ obtained is the image of $G-v \in co(L(T))$ under $f_G$ and so $co(G)-v \in L(T)$.
    Therefore, by Theorem~\ref{theorem3}, $L(T)$ is a boundary class for \Prob{clique-cover}.
  \end{proof}

  \begin{corollary}\label{col7}
    $Q$ is a boundary class for \Prob{bounded-degree-spanning-tree}.
  \end{corollary}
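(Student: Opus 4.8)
The plan is to apply Theorem~\ref{theorem3} with $\Pi = \Prob{hamiltonian-path}$ and $\Gamma = \Prob{bounded-degree-spanning-tree}$, using the fact recorded in the excerpt (Munaro~\cite{Munaro17}) that $Q$ is already a boundary class for \Prob{hamiltonian-path}. The entire argument then reduces to exhibiting a bi-reduction from \Prob{hamiltonian-path} to \Prob{bounded-degree-spanning-tree} whose graph part $f_G$ is hereditary closed and carries $Q$ to $Q$, exactly in the style of Corollary~\ref{col1}.

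The key graph-theoretic observation I would establish first is that a graph $G$ admits a Hamiltonian path if and only if $G$ has a $2$-bounded degree spanning tree. Indeed, a Hamiltonian path is itself a spanning tree in which every vertex has degree at most $2$; conversely, any spanning tree with maximum degree at most $2$ is connected, acyclic, and has all degrees bounded by $2$, which forces it to be a simple path on all of $V(G)$, i.e.\ a Hamiltonian path. This equivalence is what makes the parameter value $k = 2$ the correct choice and drives the whole reduction.

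With this in hand I would define $f(\langle G \rangle) = \langle G, 2 \rangle$ and argue that $f$ is a bi-reduction: it is polynomial-time computable, its inverse (which simply discards the fixed parameter $2$) is polynomial-time computable, and by the equivalence above $\langle G \rangle \in \Prob{hamiltonian-path}$ if and only if $\langle G, 2 \rangle \in \Prob{bounded-degree-spanning-tree}$. As in Corollary~\ref{col1}, the graph reduction $f_G$ maps every graph to itself, so $f_G$ is the identity on graphs; it is therefore trivially hereditary closed and satisfies $f_G(Q) = Q$. Applying Theorem~\ref{theorem3} then immediately gives that $Q$ is a boundary class for \Prob{bounded-degree-spanning-tree}. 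I expect no real obstacle in the reduction machinery, since $f_G$ is the identity; the only points requiring genuine care are the biconditional between Hamiltonian paths and $2$-bounded spanning trees, and confirming that \Prob{bounded-degree-spanning-tree} is indeed \NPH\ (which holds, as it contains Hamiltonian path as the special case $k = 2$) so that the hypotheses of Theorem~\ref{theorem3} are met.
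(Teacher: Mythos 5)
Your proposal is correct and follows essentially the same route as the paper: the reduction $f(\langle G \rangle) = \langle G, 2 \rangle$ based on the equivalence between Hamiltonian paths and $2$-bounded degree spanning trees, with $f_G$ the identity and hence hereditary closed, followed by an application of Theorem~\ref{theorem3}. Your added justification of the biconditional and of the polynomial-time invertibility of $f$ only makes the argument more complete than the paper's version.
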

  \begin{proof}
    Since $Q$ is a boundary class for \Prob{hamiltonian-path}. 
    From \cite{Garey90} we know that $\Prob{hamiltonian path} \leq_p \Prob{bounded-degree-spanning-tree}$. For a graph $G = (V, E)$, $G$ has a Hamiltonian path if and only if $G$ has a spanning tree of bounded degree of 2.
    So we have that the function $f: \Sigma^* \to \Sigma^*$, such that $f(\langle G \rangle) = \langle G, 2 \rangle$ is a bi-reduction from \Prob{hamiltonian-path} to \Prob{bounded-degree-spanning-tree} where $G$ is a graph.
    The graph reduction $f_G$ under $f$ is hereditary closed as the mapping maps every graph to itself.
    Therefore, by Theorem~\ref{theorem3}, $Q$ is a boundary class for \Prob{bounded-degree-spanning-tree}.
  \end{proof}

  Through the application of our main theorem, we obtained seven new previously unknown boundary classes. With our result we able to get new boundary classes from previously known boundary classes and reductions. This shows us the power of our results and also gives us new boundary classes.

\section{Conclusion}\label{sec: conlusion}
The notion of boundary classes is a helpful tool in studying the computational complexity of $\NPH$ graph problems.
In this paper, we introduced a method for transforming the boundary class of one $\NPH$ graph problem into a boundary class for another $\NPH$ graph problem. 
This method gave us a relationship between different boundary classes of different $\NPH$ graph problems. Furthermore, it also gave us a relationship between boundary classes of $\NPH$ graph problems and a polynomial time reduction between them. This result is helpful to further the study of boundary classes of $\NPH$ graph problems.
We use our method to obtain seven new boundary classes for \emph{vertex cover}, \emph{clique}, \emph{traveling-salesperson}, \emph{bounded-degree-spanning-tree}, \emph{clique-cover} and \emph{subgraph-isomorphism}.
For all of these problems we provided the first boundary class. This shows the strength of our main result.
Our method can be used in the future to obtain more boundary classes for $\NPH$ graph problems. If we can obtain a reduction between $\NPH$ graph problems with the sufficient and necessary conditions we introduced then we can use those reductions to obtain further boundary classes.
Given that all problems in $\NPC$ are reducible to each other we can use our method on many $\NPH$ graph problems. 
Proving a class to be the boundary class for a $\NPH$ graph problem isn't a simple task, however, our theorem gives a method to obtain new previously unknown boundary classes for $\NPH$ graph problems.
This gives us a powerful tool to study the computational complexity of $\NPH$ graph problems and obtain their boundary classes.

\bibliographystyle{plain} 
\bibliography{reducibilityBoundaryClass}

\end{document}